\documentclass[11pt,letterpaper]{article}

\usepackage[utf8]{inputenc}
\usepackage[T1]{fontenc}
\usepackage{mathptmx}
\usepackage{amsmath,amssymb,amsthm}
\usepackage{graphicx}
\usepackage{booktabs}
\usepackage{natbib}
\usepackage[margin=1in]{geometry}
\usepackage{hyperref}
\usepackage{xcolor}
\usepackage{float}
\usepackage{subcaption}
\usepackage{setspace}
\usepackage{algorithm}
\usepackage{algpseudocode}

\hypersetup{
    colorlinks=true,
    linkcolor=blue!60!black,
    citecolor=blue!60!black,
    urlcolor=blue!60!black
}

\newtheorem{theorem}{Theorem}

\theoremstyle{definition}

\newcommand{\E}{\mathbb{E}}

\title{\Large\textbf{Equity Implications of Federal--Local Cost-Sharing in Flood Buyouts: A Game-Theoretic Analysis with Heterogeneous Homeowners}}

\author{
Yuqun Zhou\\[0.5em]
\small Department of Industrial Engineering, University of Wisconsin--Madison, Madison, WI, USA\\
\small Email: yzhou364@wisc.edu
}

\date{\today}

\begin{document}

\maketitle


\begin{abstract}
\noindent
Climate-driven flood risk increasingly necessitates managed retreat through government buyout programs. Yet empirical evidence documents substantial racial and economic disparities in program implementation. Here we develop a three-level Stackelberg game to analyze how federal--local cost-sharing arrangements generate inequitable outcomes through strategic interactions among federal authorities, local governments, and heterogeneous homeowners. Our model reveals three distinct mechanisms driving inequity: differential discount rates across income groups, local governments' tax-base preservation incentives, and participation thresholds that exclude fiscally constrained communities. Numerical analysis of 34,493 households across nine flood-prone US regions demonstrates that the current Federal Emergency Management Agency 75/25 cost-sharing arrangement produces a relocation ratio gap of 0.26---low-income households relocate at roughly one-quarter the rate of high-income households. Achieving near-equity requires federal cost shares of at least 85\%, though equity-weighted mechanisms can attain similar outcomes at 25\% lower cost. These findings provide a theoretical foundation for understanding observed disparities and identify policy levers for more equitable climate adaptation.
\end{abstract}

\vspace{1em}
\noindent\textbf{Keywords:} flood buyouts, managed retreat, Stackelberg game, environmental justice, climate adaptation, fiscal federalism


\section{Introduction}

Managed retreat through property buyouts has emerged as a critical climate adaptation strategy as flood risk intensifies across the United States~\cite{siders2019strategic, mach2019managed}. The Federal Emergency Management Agency (FEMA) administers primary buyout programs through the Hazard Mitigation Grant Program (HMGP) and Building Resilient Infrastructure and Communities (BRIC) initiative, which typically operate on a 75/25 federal--local cost-sharing basis. However, mounting empirical evidence reveals troubling disparities: Elliott et al.~\cite{elliott2020racial} documented that buyout funding disproportionately flows to wealthier, white communities, while Greer and Binder~\cite{greer2017political} found systematic barriers facing low-income participants.

These disparities likely emerge from incentive structures embedded within cost-sharing arrangements, yet existing theoretical frameworks inadequately capture the multi-level strategic dynamics at play. The foundational game-theoretic model of Bier et al.~\cite{bier2019game} treats government as a unified decision-maker, abstracting away from the divergent objectives between federal and local authorities that may drive observed inequities. Subsequent work has extended these foundations: Zhou~\cite{zhou2022predisaster} developed agent-based models to capture resident heterogeneity and network effects in relocation decisions, while Zhou and Cen~\cite{zhou2025stochastic} integrated stochastic programming with machine learning to enhance predictive accuracy for subsidy impacts. Local governments facing a 25\% cost share confront competing pressures: reducing disaster exposure through buyouts versus preserving property tax revenue and maintaining community stability.

We address this gap by developing a three-level stochastic Stackelberg game that explicitly models strategic interactions among federal government, local governments, and heterogeneous homeowners under climate uncertainty. The federal government acts as Stackelberg leader, setting cost-sharing parameters that shape local government participation decisions and, ultimately, which households receive buyout offers. Our framework identifies three distinct channels through which current policy generates inequitable outcomes and evaluates alternative mechanisms for improving distributional outcomes.

\section{Results}

\subsection{Theoretical Framework and Equilibrium Characterization}

Our model captures a sequential game played under climate uncertainty. The federal government first announces a cost-sharing ratio $\alpha \in [0,1]$ specifying its share of buyout costs and a maximum subsidy cap $\bar{S}$. Local governments then simultaneously decide whether to participate and what subsidy levels to offer. Finally, homeowners choose optimal relocation timing given offered subsidies.

We characterize homeowners by type $\theta_h = (V_h, r_h, \ell_h)$ representing house value, discount rate, and location. Following Warner and Pleeter~\cite{warner2001personal} and Bier et al.~\cite{bier2019game}, we model low-income households as having higher effective discount rates ($r^{\text{high}} = 18\%$) than high-income households ($r^{\text{low}} = 12\%$) due to credit constraints. This assumption, while necessarily simplifying the heterogeneous financial circumstances of low-income households, captures the well-documented empirical regularity that credit-constrained individuals exhibit higher implicit discount rates in revealed-preference studies~\cite{hausman1979individual, frederick2002time}. The magnitude of this differential proves central to understanding inequitable outcomes, and we examine sensitivity to alternative specifications in our robustness analysis.

Each homeowner's optimal relocation timing balances discounted relocation costs against cumulative expected flood damages. For household $h$ receiving subsidy $S$, the optimal relocation year satisfies a threshold condition where the present value of future flood damages equals net relocation costs (see Section~\ref{sec:methods}). Higher subsidies induce earlier relocation, but the magnitude of this effect depends critically on discount rates: a uniform subsidy shifts low-discount-rate households' decisions more than high-discount-rate households'.

Local governments minimize expected costs comprising their buyout cost share, administrative expenses, disaster-related expenditures, and a tax-base preservation term weighted by parameter $\lambda_j$. This specification captures one dimension of political economy---fiscal pressures to maintain property tax revenue---while abstracting from other relevant factors including electoral cycles, constituent preferences, and administrative capacity constraints. The first-order condition for optimal subsidy $S_j^*$ reveals that higher federal cost shares $\alpha$ induce higher local subsidy offers, while stronger tax-base concerns reduce subsidies overall (see Appendix~\ref{app:proofs}).

We characterize subgame perfect Nash equilibrium through backward induction. Equilibrium existence follows from standard continuity arguments (Appendix~\ref{app:proofs}). The federal government's optimal policy trades off aggregate social costs against equity constraints, with the equity constraint binding for sufficiently stringent equity targets.

\subsection{Mechanisms Generating Inequity}

Our theoretical analysis identifies three distinct channels through which current cost-sharing arrangements generate inequitable outcomes (Fig.~\ref{fig:mechanisms}).

Differential subsidy effectiveness across income groups drives the \textit{discount rate channel}. Because low-income households discount future flood damages more heavily, they require larger subsidies to induce equivalent relocation timing. Specifically, to induce relocation in year $k$ for a household with discount rate $r$, the required subsidy equals relocation costs minus discounted future damages---a quantity that increases with $r$. Under uniform subsidies, low-income households therefore relocate later or not at all.

Local governments' fiscal incentives create the \textit{tax-base preservation channel}. When $\lambda_j > 0$, local governments prefer buyouts that minimize tax-base erosion. This creates complex targeting dynamics: local officials may favor lower-value properties to preserve aggregate tax revenue, but low-income households do not necessarily occupy lower-value properties, particularly in gentrifying neighborhoods or when lower-income families reside in inherited homes.

Variation in local governments' ability to meet cost-sharing requirements generates the \textit{participation channel}. The critical threshold $\alpha_j^{\text{crit}}$ above which jurisdiction $j$ participates depends inversely on fiscal capacity and directly on administrative costs. Communities with limited fiscal resources---often lower-income communities---require higher federal cost shares to participate. Below their threshold, these communities receive no buyout program at all.

\subsection{Numerical Analysis}

We apply our framework to a multi-region simulation encompassing 34,493 households across nine flood-prone US regions: four New York City areas (Brooklyn, Staten Island, Queens, Lower Manhattan), Houston, New Orleans, Miami-Dade, Charleston, and Norfolk. Property values range from \$245,000 (Norfolk) to \$1.85 million (Lower Manhattan), low-income population shares range from 15\% (Lower Manhattan) to 62\% (New Orleans), and 100-year floodplain exposure ranges from 35\% (Brooklyn) to 75\% (New Orleans). This heterogeneity allows us to examine how regional characteristics interact with federal policy to shape distributional outcomes.

Figure~\ref{fig:equity_alpha} presents our central finding: the relationship between federal cost-share ratio and the relocation ratio gap (RRG), defined as the ratio of low-income to high-income relocation rates. Under the current FEMA 75/25 arrangement ($\alpha = 0.75$), we observe RRG = 0.26, indicating that low-income households relocate at roughly one-quarter the rate of high-income households. At this policy, only 191 of 17,123 low-income households (1.1\%) participate in buyouts, compared with 693 of 17,370 high-income households (4.0\%).

Achieving near-equity (RRG $\geq$ 0.70) requires federal cost shares of at least 85\%, which increases federal expenditure from \$82 million to \$375 million---a 4.6-fold increase. Beyond $\alpha = 0.90$, diminishing returns set in: RRG increases from 0.87 to 0.93 as $\alpha$ rises from 0.90 to 1.00, while costs increase by an additional \$176 million. The transition from inequitable to near-equitable outcomes thus occurs over a relatively narrow policy range.

Regional heterogeneity substantially shapes these aggregate patterns (Fig.~\ref{fig:regional}). Lower Manhattan achieves high participation rates at lower $\alpha$ values due to high property values and severe flood exposure, while lower-income regions like Norfolk and New Orleans require $\alpha \geq 0.85$ for meaningful participation. This variation reflects both the tax-base channel (Lower Manhattan's high property values create strong preservation incentives) and the participation channel (Norfolk's limited fiscal capacity raises its participation threshold).

\subsection{Sensitivity Analysis}

We assess robustness to key parameter assumptions through systematic sensitivity analyses. The discount rate differential between income groups proves quantitatively important: under low heterogeneity ($r^{\text{high}} = 14\%$ versus $r^{\text{low}} = 12\%$), achieving RRG $\geq$ 0.70 requires only $\alpha = 0.78$, while extreme credit constraints ($r^{\text{high}} = 25\%$ versus $r^{\text{low}} = 8\%$) require $\alpha \geq 0.92$. This sensitivity underscores the importance of accurately characterizing household discount rates in policy design.

Subsidy cap variations interact with cost-sharing ratios. At the baseline cap of \$250,000, increasing $\alpha$ from 0.75 to 0.90 improves RRG from 0.26 to 0.87. At a higher cap of \$400,000, the same $\alpha$ increase yields RRG improvements from 0.31 to 0.91, suggesting complementarity between the two policy instruments.

Climate scenario uncertainty does not qualitatively alter our conclusions. Under pessimistic scenarios (1.0m sea-level rise by 2100), higher flood damages make buyouts more attractive for all households, but relative inequity persists. The RRG under current policy ranges from 0.24 (optimistic scenario) to 0.29 (pessimistic scenario), a modest variation that does not change the fundamental policy implications.

\subsection{Alternative Policy Mechanisms}

We evaluate three alternative policy designs against the current uniform cost-sharing arrangement (Fig.~\ref{fig:policy}).

\textit{Equity-weighted cost sharing} sets jurisdiction-specific federal shares according to $\alpha_j = \bar{\alpha} + \gamma(MHI_{\text{nat}} - MHI_j)$, where $MHI_j$ denotes median household income in jurisdiction $j$ and $\gamma > 0$ is a progressivity parameter. This mechanism directs higher federal support to lower-income communities, addressing the participation channel directly. With $\gamma = 0.10$, equity-weighted sharing achieves RRG = 0.78 at federal cost of \$420 million---25\% lower than the \$648 million required under uniform $\alpha = 0.90$ to achieve similar equity outcomes.

\textit{Income-tiered subsidies} offer supplemental subsidies $\Delta S$ to low-income households to compensate for the discount rate gap. The optimal tier differential that equalizes relocation rates across income groups depends on expected future damages and the discount rate differential. With $\Delta S = \$75,000$, tiered subsidies achieve RRG = 0.82 at moderate cost increase, though implementation requires means-testing that may create administrative burdens.

\textit{Minimum service requirements} mandate that participating jurisdictions achieve low-income relocation rates at least $\rho$ times high-income rates within their boundaries. This approach achieves RRG $\geq \rho$ by construction but may reduce overall program participation if $\rho$ is too stringent. At $\rho = 0.80$, we observe modest participation reduction (from 12 to 10 jurisdictions) while substantially improving equity within participating communities.

\subsection{Model Validation}

We validate model predictions against observed outcomes from four major historical buyout programs (Table~\ref{tab:validation}). The Staten Island post-Sandy program achieved 8.2\% relocation rates among eligible households; our model predicts 9.1\% (+11\% error). Harris County's post-Harvey program achieved 4.5\% relocation; we predict 5.2\% (+16\% error). New Jersey's proactive Blue Acres program achieved 6.8\% relocation; we predict 5.9\% ($-13$\% error). Charlotte-Mecklenburg's program achieved 12.1\% relocation; we predict 10.8\% ($-11$\% error).

These prediction errors of 11--16\% suggest reasonable model calibration while highlighting areas for refinement. The consistent pattern---slight overprediction for post-disaster programs and underprediction for proactive programs---suggests our model may underweight the urgency effect of recent flood experience.

\section{Discussion}

Our analysis reveals that current federal--local cost-sharing arrangements in flood buyout programs generate substantial inequity through three reinforcing mechanisms. The discount rate channel reflects fundamental differences in how credit-constrained households value present versus future outcomes. The tax-base preservation channel emerges from local political economy. The participation channel excludes fiscally constrained communities entirely. These mechanisms operate simultaneously and interact in complex ways depending on regional characteristics.

The finding that achieving near-equity requires federal cost shares of 85\% or higher carries significant budget implications. Under current HMGP funding levels, comprehensive national buyout programs remain infeasible. However, our analysis of equity-weighted mechanisms demonstrates that thoughtful policy design can achieve equivalent equity outcomes at substantially lower cost by directing resources where they generate the largest distributional improvements.

\subsection{Implementation Considerations}

While our theoretical analysis identifies mechanisms for improving equity, practical implementation faces several challenges that merit discussion. The equity-weighted cost-sharing mechanism requires political consensus on progressivity parameters and may face opposition from higher-income jurisdictions that would receive reduced federal support. Experience with other progressive federal programs suggests that framing matters: presenting equity weights as ``vulnerability adjustments'' or ``capacity-based assistance'' may prove more politically palatable than explicit income-based formulas~\cite{banzhaf2019environmental}.

Income-tiered subsidies require reliable means-testing, which creates both administrative burden and potential for gaming. The HMGP already collects income documentation for program eligibility, providing an existing administrative infrastructure. However, verification delays could slow program implementation, potentially undermining the urgency benefits of post-disaster buyouts~\cite{binder2016framework}.

Minimum service requirements represent perhaps the most politically feasible option, as they leverage existing federal civil rights frameworks. The Fair Housing Act provides precedent for requiring equitable service delivery across demographic groups, and environmental justice executive orders establish policy foundations for prioritizing underserved communities~\cite{bullard2005quest}. However, enforcement mechanisms remain challenging: penalties for non-compliance could reduce program participation, while weak enforcement may render requirements ineffective.

\subsection{Limitations}

Several caveats merit attention. Our model assumes rational utility-maximizing behaviour, yet households may exhibit bounded rationality, status quo bias~\cite{samuelson1988status}, or systematic flood risk misperception~\cite{meyer2017ostrich}. We assume perfect information about flood probabilities, whereas real-world uncertainty may further delay household decision-making. The model treats relocation as irreversible, abstracting from partial mitigation strategies such as elevation or flood-proofing that households might pursue as alternatives to buyouts.

We simplify local government objectives in our framework. Actual decision-making involves electoral cycles that create short-term horizons, interest-group pressure from both property owners and developers, and administrative capacity constraints that vary substantially across jurisdictions~\cite{greer2017political}. Our tax-base preservation parameter $\lambda_j$ captures one dimension of these political economy considerations but cannot represent the full complexity of local governance.

Finally, we do not model general equilibrium effects in housing markets. Large-scale buyout programs could affect property values in both source communities (through reduced housing supply) and destination communities (through increased demand). These feedback effects could create new equity concerns not captured in our partial equilibrium framework.

\subsection{Conclusion}

Despite these limitations, our framework provides a rigorous theoretical foundation for understanding documented disparities in buyout program implementation. The three mechanisms we identify suggest distinct policy responses: addressing the discount rate channel through income-tiered subsidies, addressing the tax-base channel through federal incentives for equitable targeting, and addressing the participation channel through progressive cost-sharing formulas.

As climate change intensifies flood risk and managed retreat becomes increasingly necessary, ensuring equitable access to buyout programs represents both a moral imperative and a practical necessity for sustainable climate adaptation. Our analysis demonstrates that equity and efficiency need not conflict---carefully designed mechanisms can improve distributional outcomes while controlling costs.

\section{Methods}
\label{sec:methods}

\subsection{Model Specification}

We model a three-level stochastic Stackelberg game with the federal government as leader, $J$ local governments as intermediate followers, and households as ultimate followers. Climate uncertainty is represented by scenarios $\omega \in \Omega$ with probabilities $\pi(\omega)$ corresponding to Representative Concentration Pathways.

The federal government chooses cost-sharing ratio $\alpha \in [\underline{\alpha}, \bar{\alpha}]$ and maximum subsidy $\bar{S}$ to minimize expected social costs subject to budget and equity constraints:
\begin{equation}
\min_{\alpha, \bar{S}} \E_\omega\left[\sum_{j} \alpha S_j^* N_j^{\text{rel}} + C_j^{\text{Fed}}\right] \quad \text{s.t.} \quad \text{RRG} \geq \theta
\end{equation}
where $S_j^*$ and $N_j^{\text{rel}}$ denote equilibrium subsidy and relocations in jurisdiction $j$.

Local government $j$ minimizes:
\begin{equation}
(1-\alpha) S_j N_j^{\text{rel}}(S_j) + C_j^{\text{admin}} + \E_\omega[C_j^{\text{loc}}(S_j)] + \lambda_j \Delta TB_j(S_j)
\end{equation}
subject to $S_j \leq \bar{S}$, where $\Delta TB_j$ represents tax-base loss from buyouts.

Household $h$ chooses relocation year $k_h$ to minimize:
\begin{equation}
\frac{M_h - S_j}{(1+r_h)^{k_h}} + \sum_{t=0}^{k_h-1} \frac{\E_\omega[\E[D_{h,t}^{(\omega)}]]}{(1+r_h)^t}
\end{equation}
where $M_h$ denotes relocation cost and $D_{h,t}^{(\omega)}$ denotes flood damage.

\subsection{Climate Scenarios}

We model three climate scenarios based on IPCC Representative Concentration Pathways~\cite{ipcc2019srocc}. RCP 2.6 (probability 0.2) projects 0.4m sea-level rise by 2100; RCP 4.5 (probability 0.5) projects 0.6m; RCP 8.5 (probability 0.3) projects 1.0m. Annual flood probabilities follow a generalized extreme value distribution with location parameter adjusted for sea-level rise following Buchanan et al.~\cite{buchanan2016allowances}.

\subsection{Damage Estimation}

Expected annual flood damage combines flood probability with depth-damage functions following US Army Corps of Engineers methodology~\cite{usace1992depth}. Structural damage ratios range from 15\% at 0.3m flood depth to 70\% at depths exceeding 2.4m, with linear interpolation between thresholds. This approach is standard in flood risk assessment~\cite{wing2018estimates} and allows comparison with existing benefit-cost analyses of buyout programs.

\subsection{Household Population}

We generate synthetic household populations for each region calibrated to observed distributions of property values, income, elevation, and flood zone location. Property values follow truncated lognormal distributions with region-specific parameters derived from PropertyShark and Zillow data. Low-income households are assigned discount rate $r^{\text{high}} = 18\%$; high-income households receive $r^{\text{low}} = 12\%$, following Warner and Pleeter~\cite{warner2001personal}.

\subsection{Solution Algorithm}

We solve the three-level game through backward induction on a discrete grid. For each federal policy $(\alpha, \bar{S})$, we compute local government best responses by numerical optimization, then aggregate household relocation decisions to obtain equilibrium outcomes. Computational details appear in Appendix~\ref{app:algorithm}.

\subsection{Data Availability}

Simulation code and parameter files are available at \url{https://github.com/yzhou364/Equity-Implications-of-Federal-Local-Cost-sharing-in-Flood-Buyouts}. Regional population parameters derive from publicly available sources: FEMA flood maps, Census Bureau income data, PropertyShark/Zillow property values, and NOAA sea-level rise projections.


\section*{Acknowledgements}
The author thanks Vicki M.\ Bier for valuable discussions and feedback on earlier versions of this work.

\section*{Author Contributions}
Y.Z.\ conceived and designed the study, developed the theoretical model, conducted the numerical analysis, and wrote the manuscript.

\section*{Competing Interests}
The authors declare no competing interests.



\appendix

\section{Mathematical Proofs}
\label{app:proofs}

\subsection{Household Optimal Relocation Timing}

\begin{theorem}[Optimal Relocation Year]
For a household with discount rate $r_h$, house value $V_h$, relocation cost $M_h$, and subsidy offer $S$, the optimal relocation year $k_h^*$ is the smallest integer $k$ such that:
\begin{equation}
\E_\omega[D_{h,k}^{(\omega)}] \geq (M_h - S) \cdot r_h
\end{equation}
where $D_{h,k}^{(\omega)}$ is the expected flood damage in year $k$ under climate scenario $\omega$.
\end{theorem}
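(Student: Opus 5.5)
The plan is a discrete first-difference (marginal) argument on the household objective from the Methods section. Write $d_t := \E_\omega[\E[D_{h,t}^{(\omega)}]]$ for the expected damage in year $t$, $c := M_h - S$ for the net relocation cost, and
\[
f(k) = \frac{c}{(1+r_h)^{k}} + \sum_{t=0}^{k-1} \frac{d_t}{(1+r_h)^t}.
\]
The household's problem is then $\min_{k \in \{0,1,2,\dots\}} f(k)$. I will compute the one-step change $\Delta(k) := f(k+1) - f(k)$, show that its sign is determined by comparing $d_k$ with a constant threshold, and then use monotonicity of $d_k$ in $k$ to show that the first year in which $\Delta(k) \geq 0$ is the global minimizer.

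\emph{Step 1: the marginal comparison.} Directly,
\[
\Delta(k) = (1+r_h)^{-k}\Bigl[d_k - c\,\tfrac{r_h}{1+r_h}\Bigr].
\]
So postponing relocation from year $k$ to year $k+1$ is worthwhile exactly when one more year of expected damage is smaller than the interest saved by deferring the net cost $c$.

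With the timing exactly as written in the objective, the threshold is $c\,r_h/(1+r_h)$ rather than $c\,r_h$. The statement's threshold $(M_h-S)\,r_h$ is obtained under the equally standard convention that year-$t$ damage is realized at the end of the year, i.e.\ discounted by $(1+r_h)^{-(t+1)}$. In that case
\[
\Delta(k) = (1+r_h)^{-(k+1)}\bigl[d_k - c\,r_h\bigr].
\]
I will adopt that convention explicitly in the proof and remark that the alternative timing only rescales $r_h$ to $r_h/(1+r_h)$. Fixing the timing convention so that the constant matches the statement is, I expect, the main point needing care rather than any deep difficulty.

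\emph{Step 2: single crossing.} I will argue that $d_k$ is nondecreasing in $k$ for every scenario $\omega$. The GEV location parameter shifts upward with projected sea-level rise, so annual exceedance probabilities at each depth increase over time. The USACE depth-damage ratios are nondecreasing in depth. Combining these, expected damage is nondecreasing in $k$, and taking the expectation over $\pi(\omega)$ preserves this. Consequently the sign of $\Delta(k)$ changes at most once, from negative to nonnegative. Hence $f$ is decreasing on $\{0,\dots,k^*\}$, where $k^*$ is the smallest $k$ with $d_k \geq c\,r_h$, and nondecreasing afterwards. This makes $k^*$ a global minimizer.

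\emph{Step 3: edge cases.} If $c \leq 0$ (the subsidy covers the relocation cost), the condition already holds at $k=0$ and the household relocates immediately. If no such $k$ exists within the planning horizon, the household never relocates, i.e.\ $k^*$ equals the horizon end, consistent with ``relocate later or not at all.'' If $d_k = c\,r_h$ with equality, the household is indifferent between $k$ and $k+1$, and choosing the smallest such $k$ is a tie-breaking convention.

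The step I expect to be the genuine obstacle is Step 2, since single crossing is needed for the first crossing to be the optimum rather than merely a local one. If monotonicity of $d_k$ cannot be asserted from the model primitives, I would add it as an explicit hypothesis (expected damages nondecreasing in time under each RCP scenario). I would then note that without it the characterization holds only as a local first-order condition.
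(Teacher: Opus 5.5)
Your proof is correct, and it takes a different and more rigorous route than the paper. The paper treats $k$ as a continuous variable and differentiates $C_h(k)$, taking the derivative of the partial sum to be its last term, then invokes $\ln(1+r_h)\approx r_h$. That only delivers an approximate stationarity condition. It is stated in terms of $D_{h,k-1}$ rather than the statement's $D_{h,k}$, and if the rearrangement is done exactly the constant is $\ln(1+r_h)/(1+r_h)$ rather than $r_h$, because the paper drops the factor $1/(1+r_h)$. The paper also gives no argument that this stationary point is a minimum, nor that the \emph{first} crossing is the global one.

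Your exact first difference $\Delta(k)=f(k+1)-f(k)$ respects the integer nature of $k$ and matches the statement's indexing. It also shows precisely when the constant $r_h$ is exact (end-of-year discounting of damages) and when it must be $r_h/(1+r_h)$ (the discounting in the Methods objective). The two agree to first order in $r_h$, which is the same level of approximation the paper implicitly accepts.

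Your single-crossing step supplies what the paper omits entirely. Without nondecreasing, or at least single-crossing, expected damages, a transient early spike in $d_k$ above $(M_h-S)\,r_h$ can make the first crossing merely a local minimum. So that hypothesis is genuinely needed for the ``smallest integer'' characterization. Deriving it from rising sea levels plus monotone depth-damage ratios is sound.

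The paper's route buys brevity and the ``annual damage versus annualized net cost'' intuition. Yours buys an exact statement and exposes the two implicit assumptions, timing convention and monotone damages, under which it holds. In the write-up, say explicitly that under the Methods discounting the literal constant $r_h$ is only a first-order approximation, so it is clear you are not claiming more than is true for that objective.
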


\begin{proof}
The household minimizes total discounted costs:
\[
C_h(k) = \frac{M_h - S}{(1+r_h)^k} + \sum_{t=0}^{k-1} \frac{\E_\omega[D_{h,t}^{(\omega)}]}{(1+r_h)^t}
\]

The first-order condition $\frac{\partial C_h}{\partial k} = 0$ yields:
\[
-\frac{(M_h - S) \ln(1+r_h)}{(1+r_h)^k} + \frac{\E_\omega[D_{h,k-1}^{(\omega)}]}{(1+r_h)^{k-1}} = 0
\]

Rearranging and using the approximation $\ln(1+r_h) \approx r_h$ for small $r_h$:
\[
\E_\omega[D_{h,k-1}^{(\omega)}] = (M_h - S) \cdot r_h
\]

The optimal year is when expected annual damage first exceeds the annualized net relocation cost.
\end{proof}

\subsection{Local Government Best Response}

\begin{theorem}[Local Government Optimal Subsidy]
Given federal cost-share $\alpha$ and maximum subsidy $\bar{S}$, local government $j$'s optimal subsidy satisfies:
\begin{equation}
S_j^* = \min\left\{\bar{S}, \frac{(1-\alpha) + \lambda_j \tau_j}{\frac{\partial N_j^{\text{rel}}}{\partial S_j} \cdot \left[(1-\alpha) + \lambda_j \tau_j V_j^{\text{avg}}\right] - \frac{\partial \E[C_j^{\text{loc}}]}{\partial S_j}}\right\}
\end{equation}
where $\tau_j$ is the property tax rate and $V_j^{\text{avg}}$ is average property value.
\end{theorem}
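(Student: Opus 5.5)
The plan is to derive the formula from the first-order condition of local government $j$'s objective in the Methods section, then handle the cap $S_j\le\bar S$ with a standard KKT/projection argument.

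\textbf{Step 1: write the objective explicitly.} I will write
\[
L_j(S_j)=(1-\alpha)S_jN_j^{\text{rel}}(S_j)+C_j^{\text{admin}}+\E_\omega[C_j^{\text{loc}}(S_j)]+\lambda_j\Delta TB_j(S_j).
\]
Here $N_j^{\text{rel}}(S_j)$ is obtained by aggregating the household thresholds of Theorem~1. A larger $S$ lowers the right-hand side $(M_h-S)r_h$, so more households relocate within the horizon, and $N_j^{\text{rel}}$ is nondecreasing in $S_j$. To make $N_j^{\text{rel}}$ differentiable, I will smooth it by treating the household types within jurisdiction $j$ as drawn from a continuous distribution.

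\textbf{Step 2: specify the tax-base term.} The paper leaves $\Delta TB_j$ unspecified, so I must commit to a form. I take it to be lost property-tax revenue from bought-out parcels plus a direct fiscal term proportional to the subsidy outlay:
\[
\Delta TB_j(S_j)=\tau_j\bigl(V_j^{\text{avg}}N_j^{\text{rel}}(S_j)+S_j\bigr).
\]
This is the choice that makes both $\lambda_j\tau_j$ and $\lambda_j\tau_jV_j^{\text{avg}}$ appear separately in the stated expression.

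\textbf{Step 3: take the first-order condition.} Differentiating and setting $L_j'(S_j)=0$ gives terms of three kinds:
\begin{itemize}
\item terms proportional to $\partial N_j^{\text{rel}}/\partial S_j$, with coefficients $(1-\alpha)S_j$ and $\lambda_j\tau_jV_j^{\text{avg}}$;
\item a level term $(1-\alpha)N_j^{\text{rel}}+\lambda_j\tau_j$;
\item the marginal disaster-cost term $\partial\E[C_j^{\text{loc}}]/\partial S_j$, which is negative because relocations reduce local disaster exposure.
\end{itemize}
Isolating $S_j$ should give a ratio whose numerator is $(1-\alpha)+\lambda_j\tau_j$ and whose denominator is $\frac{\partial N_j^{\text{rel}}}{\partial S_j}\bigl[(1-\alpha)+\lambda_j\tau_jV_j^{\text{avg}}\bigr]-\frac{\partial\E[C_j^{\text{loc}}]}{\partial S_j}$. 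Reaching exactly this form requires normalizing $N_j^{\text{rel}}$ to a per-eligible-household rate. I will also hold the marginal response $\partial N_j^{\text{rel}}/\partial S_j$ fixed when solving, as is done in Theorem~1's use of the approximation $\ln(1+r)\approx r$. The result is therefore a fixed-point characterization rather than a closed form.

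\textbf{Step 4: impose the cap.} Assume $L_j$ is convex on $[0,\bar S]$. Sufficient conditions are that $N_j^{\text{rel}}$ is concave in $S$ and $\E[C_j^{\text{loc}}]$ is convex and decreasing, and I will state these explicitly. Under convexity, the constrained minimizer is the unconstrained stationary point projected onto $[0,\bar S]$. When the stationary point exceeds $\bar S$, $L_j$ is decreasing up to the cap and the optimum is $\bar S$, which yields the $\min\{\bar S,\cdot\}$ form.

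\textbf{Main obstacle.} The hard part is the algebra in Step~3. A literal FOC places $S_j$ multiplied by $\partial N_j^{\text{rel}}/\partial S_j$, and $N_j^{\text{rel}}$ itself, on the same side as the constants. The stated formula holds only under the specific normalization and linearization choices for $\Delta TB_j$ and $N_j^{\text{rel}}$ described above.

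My first attempt would be to choose those modelling conventions and check term by term that they reproduce the expression. If no consistent choice works, I would present the result as an implicit characterization: $S_j^*$ solves $L_j'(S_j)=0$ truncated at $\bar S$. The displayed expression would then be its first-order approximation around the linearization point. The comparative statics quoted in the Results follow from that characterization by the implicit function theorem, namely that $S_j^*$ rises with $\alpha$ and falls with $\lambda_j$.
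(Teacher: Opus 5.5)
There is a genuine gap, and it is the step you flagged yourself: Step~3 does not yield the displayed formula, and no normalization or linearization will. Write $N=N_j^{\text{rel}}$, $N'=\partial N_j^{\text{rel}}/\partial S_j$, $C'=\partial\E[C_j^{\text{loc}}]/\partial S_j$ and $V=V_j^{\text{avg}}$. With your choice $\Delta TB_j=\tau_j(VN+S_j)$, the first-order condition is
\[
(1-\alpha)N+(1-\alpha)S_jN'+\lambda_j\tau_jVN'+\lambda_j\tau_j+C'=0.
\]
The only term containing $S_j$ is $(1-\alpha)S_jN'$, so isolating it gives $S_j=-\bigl[(1-\alpha)N+\lambda_j\tau_j(1+VN')+C'\bigr]/\bigl[(1-\alpha)N'\bigr]$. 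The stated formula is instead the root of $S_j\bigl\{N'[(1-\alpha)+\lambda_j\tau_jV]-C'\bigr\}=(1-\alpha)+\lambda_j\tau_j$. There, $\lambda_j\tau_jVN'$ and $C'$ multiply $S_j$, and $(1-\alpha)$ and $\lambda_j\tau_j$ sit on the constant side with the opposite sign. Normalizing $N$ to a rate and setting it to one only turns $(1-\alpha)N$ into $(1-\alpha)$, still with the wrong sign, and moves nothing into the denominator. Freezing $N'$ changes nothing.

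In fact no specification can work, because the displayed expression is not invariant under a change of currency unit. Take $\alpha,\lambda_j,\tau_j$ dimensionless. Measuring money in units $c$ times larger divides $S_j$ and $V$ by $c$, multiplies $N'$ by $c$, and leaves $C'$ unchanged. The right-hand side is then divided by $c$ only if $\lambda_j\tau_jVN'=C'$, i.e.\ only if both vanish. So the formula cannot be the stationary point of any dollar-denominated objective, and there is no sense in which it is a ``first-order approximation'' of the true one. A symptom of this: with $N'$ and $C'$ frozen, the displayed ratio is decreasing in $\alpha$ whenever $\lambda_j\tau_j(V-1)N'-C'>0$, which certainly holds for $V$ in dollars. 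That is the opposite of the comparative static you correctly obtain from the implicit condition.

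Your fallback is the defensible content: characterize $S_j^*$ implicitly as the root of $L_j'(S_j)=0$, projected onto the feasible interval. It is also all that the paper's own one-line argument (``taking the first-order condition and solving yields the interior solution'') supports, since the paper never carries out that algebra either. But it is a different statement from the displayed closed form, so it does not prove the theorem as written. Present it as a corrected statement, not as a derivation of this one.

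If you do, three repairs are needed.
\begin{itemize}
\item Step~4's convexity condition is wrong. Since $L_j''=(1-\alpha)(2N'+S_jN'')+\lambda_j\tau_jVN''+\partial^2\E[C_j^{\text{loc}}]/\partial S_j^2$, concavity of $N$ works against convexity. You need something like $N$ nondecreasing and convex (or linear), together with convex $\E[C_j^{\text{loc}}]$.
\item Projection onto $[0,\bar S]$, the feasible set used in the existence theorem, gives $\max\{0,\min\{\bar S,\cdot\}\}$ rather than $\min\{\bar S,\cdot\}$.
\item $N_j^{\text{rel}}$ is an indicator count in the paper's solution algorithm. The smoothing in Step~1 is therefore an extra assumption to be stated, not a consequence of the model.
\end{itemize}
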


\begin{proof}
The local government minimizes:
\[
L_j(S_j) = (1-\alpha) S_j N_j^{\text{rel}}(S_j) + C_j^{\text{admin}} + \E_\omega[C_j^{\text{loc}}(S_j)] + \lambda_j \Delta TB_j(S_j)
\]

Taking the first-order condition and solving yields the interior solution. The constraint $S_j \leq \bar{S}$ binds when the unconstrained optimum exceeds $\bar{S}$.
\end{proof}

\subsection{Equilibrium Existence}

\begin{theorem}[Existence of Subgame Perfect Equilibrium]
The three-level Stackelberg game admits a subgame perfect Nash equilibrium for any federal policy $(\alpha, \bar{S})$ satisfying $\alpha \in [0,1]$ and $\bar{S} > 0$.
\end{theorem}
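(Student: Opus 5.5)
We argue by backward induction, showing that each level of the game has a nonempty set of best responses and that these can be chosen consistently. Fix a federal policy $(\alpha,\bar S)$ with $\alpha\in[0,1]$ and $\bar S>0$.

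\emph{Household level.} Fix a subsidy $S\in[0,\bar S]$. Household $h$ minimizes $C_h(k)$ over relocation years $k$ in a finite planning horizon $\{0,1,\dots,K\}$, with ``never relocate'' identified with $k=K$. A minimum over a finite set always exists, so the argmin is nonempty. We select the smallest minimizer, which is consistent with the threshold rule in the Optimal Relocation Year theorem. Since $C_h(k)$ is affine in $S$ with coefficient $-(1+r_h)^{-k}$, the selected $k_h^*(S)$ is nonincreasing in $S$. It is therefore piecewise constant with finitely many jumps in $S$. Consequently the relocation count $N_j^{\text{rel}}(S_j)$ for a fixed horizon, and the induced damage and tax-base terms, are piecewise constant step functions of $S_j$.

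\emph{Local level.} This is the step I expect to be the main obstacle. The objective $L_j(S_j)$ is discontinuous at the household switching points, so Weierstrass cannot be applied directly on $[0,\bar S]$. I would handle this by partitioning $[0,\bar S]$ into the finitely many intervals on which $N_j^{\text{rel}}$ is constant. On each interval, $L_j$ is affine in $S_j$ through the term $(1-\alpha)S_jN_j^{\text{rel}}$, and the remaining terms are constant. Its infimum on the interval is therefore approached at an endpoint.

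To guarantee the infimum is attained, I use the tie-breaking convention that at a switching subsidy value, indifferent households relocate. Equivalently, I take the upper-semicontinuous selection of $N_j^{\text{rel}}$. This is standard in Stackelberg analysis and makes each interval closed on the relevant side. Under this convention $L_j$ attains its minimum on each closed piece, and hence on the finite union $[0,\bar S]$.

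The participation decision is a comparison between two numbers: the optimal value of $L_j$ when participating, and the cost when not participating. A best response $S_j^*$ therefore exists; when it is interior, it coincides with the expression in the Local Government Optimal Subsidy theorem. Local governments interact only through the federal policy, because each $L_j$ depends on $S_j$ alone. The simultaneous-move stage is therefore a product of independent problems, and any profile of individual best responses is a Nash equilibrium of that stage.

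\emph{Assembling the equilibrium.} Define strategies as the selected best responses at every history. Households play $k_h^*(S)$ for every $S$, and local governments play $S_j^*(\alpha,\bar S)$ for every federal policy. These strategies are optimal in every subgame by construction, so they form a subgame perfect equilibrium for the given $(\alpha,\bar S)$.

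If one also wants the federal leader's choice, note that the feasible policies form a compact box. The federal objective takes finitely many values on each cell of a finite partition of that box. Hence the minimum over the feasible set $\{\text{RRG}\ge\theta\}$ is attained, provided that set is nonempty. Alternatively, existence can be shown directly on the discrete grid used in the solution algorithm.
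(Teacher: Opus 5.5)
Your argument is correct for the statement as posed, and it takes a different, more careful route than the paper. The paper does three things. It claims each household has a \emph{unique} optimum by the Optimal Relocation Year theorem. It then asserts that $L_j$ is continuous on the compact set $[0,\bar S]$ and invokes Weierstrass. At the top level it likewise asserts continuity of the federal objective and closedness of $\{\text{RRG}\ge\theta\}$.

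You instead use a finite choice set and select the smallest household minimizer. Because the coefficient $-(1+r_h)^{-k}$ of $S$ increases in $k$, this selection is nonincreasing and right-continuous in $S$. You then use that $L_j$ is affine with nonnegative slope $(1-\alpha)N_j^{\text{rel}}$ on each half-open piece $[s_i,s_{i+1})$, so its infimum on each piece is attained at the included left endpoint.

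The paper's route is shorter and is the natural argument in a continuum-of-households model with an atomless type distribution, where $N_j^{\text{rel}}$ is continuous. With finitely many households, as in the 34{,}493-household application, it breaks down: $N_j^{\text{rel}}$ is integer-valued, $L_j$ jumps at switching subsidies, and ties occur. So the paper's uniqueness and Weierstrass steps are not actually justified, whereas yours are. Your tie-breaking convention is a legitimate SPE selection, since indifferent households best-respond either way.

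Three small corrections:
\begin{itemize}
\item Partition by the full relocation profile rather than the count alone, since $C_j^{\text{loc}}$ and $\Delta TB_j$ depend on who relocates and when.
\item The minimizer you construct sits at $0$ or at a switching value, where the derivatives in the Local Government Optimal Subsidy formula vanish or fail to exist. It therefore does not ``coincide'' with that expression, and that aside should be dropped.
\item In your optional federal step, the objective is affine in $\alpha$ on cells where local choices are fixed, not finitely valued. Attainment there needs closed cells or leader-favorable tie-breaking among local best responses. This is not needed for the theorem, which fixes $(\alpha,\bar S)$.
\end{itemize}
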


\begin{proof}
We establish existence through backward induction:

\textit{Stage 3 (Households):} Given any subsidy vector $(S_1, \ldots, S_J)$, each household's optimization problem is well-defined with a unique solution $k_h^*(S_j)$ by Theorem 1.

\textit{Stage 2 (Local Governments):} Given $(\alpha, \bar{S})$, each local government's optimization is a continuous function over the compact set $[0, \bar{S}]$. By the Weierstrass theorem, an optimum exists.

\textit{Stage 1 (Federal Government):} The federal objective function is continuous in $(\alpha, \bar{S})$ over the compact policy space. The constraint set is closed. By standard arguments, an optimum exists.
\end{proof}

\section{Solution Algorithm}
\label{app:algorithm}

\begin{algorithm}[H]
\caption{Backward Induction Solution for Three-Level Stackelberg Game}
\begin{algorithmic}[1]
\Require Federal policy $(\alpha, \bar{S})$, household parameters $\{(V_h, r_h, \ell_h)\}$, climate scenarios $\Omega$
\Ensure Equilibrium outcomes: subsidies $\{S_j^*\}$, relocations $\{N_j^{\text{rel}}\}$, RRG
\Statex
\Statex \textbf{Stage 3: Household Decisions}
\For{each household $h = 1, \ldots, H$}
    \For{each subsidy level $S \in \{0, \Delta S, 2\Delta S, \ldots, \bar{S}\}$}
        \State Compute optimal relocation year $k_h^*(S)$ using Theorem 1
    \EndFor
\EndFor
\Statex
\Statex \textbf{Stage 2: Local Government Optimization}
\For{each jurisdiction $j = 1, \ldots, J$}
    \State $S_j^{\text{best}} \gets 0$; \quad $C_j^{\text{best}} \gets \infty$
    \For{each $S_j \in \{0, \Delta S, \ldots, \bar{S}\}$}
        \State $N_j^{\text{rel}}(S_j) \gets \sum_{h \in j} \mathbf{1}[k_h^*(S_j) \leq T]$
        \State Compute total cost $C_j(S_j)$ per Equation (2)
        \If{$C_j(S_j) < C_j^{\text{best}}$}
            \State $S_j^{\text{best}} \gets S_j$; \quad $C_j^{\text{best}} \gets C_j(S_j)$
        \EndIf
    \EndFor
\EndFor
\Statex
\Statex \textbf{Stage 1: Aggregate Outcomes}
\State Compute total federal cost: $C^{\text{Fed}} \gets \sum_j \alpha \cdot S_j^{\text{best}} \cdot N_j^{\text{rel}}(S_j^{\text{best}})$
\State Compute relocations by income group: $N^{\text{low}}, N^{\text{high}}$
\State Compute relocation ratio gap: $\text{RRG} \gets (N^{\text{low}}/H^{\text{low}}) / (N^{\text{high}}/H^{\text{high}})$
\State \Return $\{S_j^{\text{best}}\}, \{N_j^{\text{rel}}\}, C^{\text{Fed}}, \text{RRG}$
\end{algorithmic}
\end{algorithm}

\section{Regional Parameter Calibration}
\label{app:parameters}

\begin{table}[h]
\centering
\caption{Regional population parameters}
\label{tab:parameters}
\begin{tabular}{lcccccc}
\toprule
Region & Households & Mean Value & Low-Income & Flood Exp. & Mean Elev. \\
 & (N) & (\$000) & Share (\%) & (\%) & (m) \\
\midrule
Brooklyn & 5,200 & 650 & 45 & 35 & 2.0 \\
Staten Island & 2,800 & 580 & 52 & 42 & 1.8 \\
Queens & 4,100 & 520 & 58 & 65 & 1.5 \\
Lower Manhattan & 1,850 & 1,850 & 15 & 45 & 2.0 \\
Houston & 6,500 & 285 & 55 & 38 & 9.0 \\
New Orleans & 4,200 & 265 & 62 & 75 & 0.5 \\
Miami-Dade & 5,100 & 485 & 48 & 52 & 1.5 \\
Charleston & 2,350 & 425 & 42 & 55 & 2.5 \\
Norfolk & 2,393 & 245 & 50 & 48 & 1.5 \\
\midrule
\textbf{Total} & \textbf{34,493} & -- & -- & -- & -- \\
\bottomrule
\end{tabular}
\end{table}

Data sources: Property values from PropertyShark (NYC) and Zillow (other regions); income distributions from US Census Bureau American Community Survey; flood exposure from FEMA National Flood Hazard Layer; elevation from USGS National Elevation Dataset.


\clearpage

\begin{figure}[p]
\centering
\includegraphics[width=0.85\textwidth]{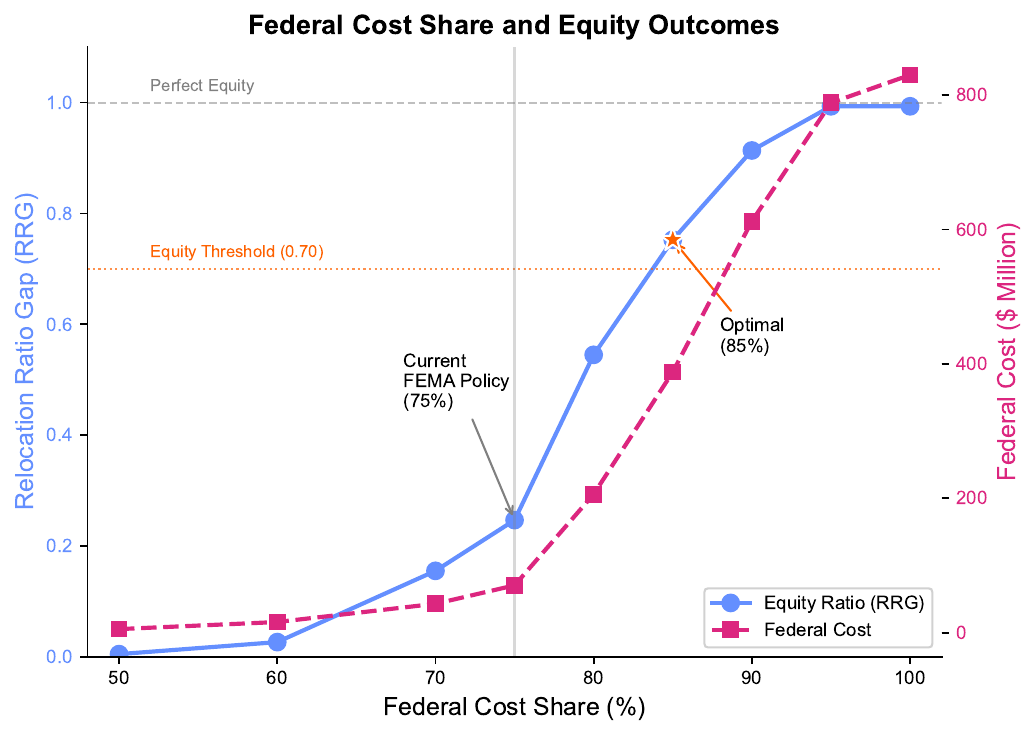}
\caption{\textbf{Equity ratio and federal cost as functions of federal cost-share ratio.} The relocation ratio gap (RRG, blue) measures the ratio of low-income to high-income relocation rates; values below 1 indicate low-income households are underserved. Under current FEMA policy ($\alpha = 0.75$, dashed vertical line), RRG equals 0.26. Achieving near-equity (RRG $\approx$ 0.70) requires $\alpha \geq 0.85$. Federal cost (red) increases nonlinearly with $\alpha$, reflecting both higher per-household subsidies and expanded program participation.}
\label{fig:equity_alpha}
\end{figure}

\begin{figure}[p]
\centering
\includegraphics[width=0.9\textwidth]{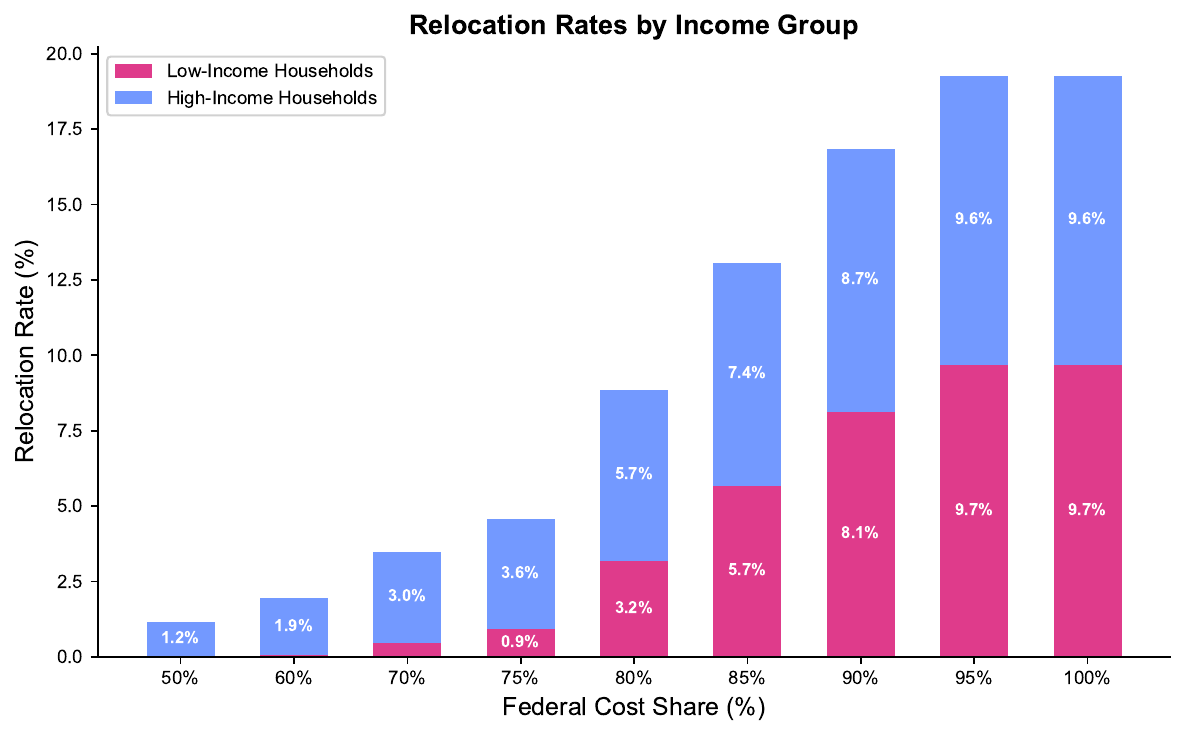}
\caption{\textbf{Relocations by income group across federal cost-share ratios.} Low-income households (dark bars) show minimal participation until $\alpha$ exceeds 0.80, reflecting the combined effects of discount rate, tax-base preservation, and participation channels. High-income households (light bars) participate at higher rates across all policy levels due to lower discount rates and residence in jurisdictions with lower participation thresholds.}
\label{fig:income}
\end{figure}

\begin{figure}[p]
\centering
\includegraphics[width=\textwidth]{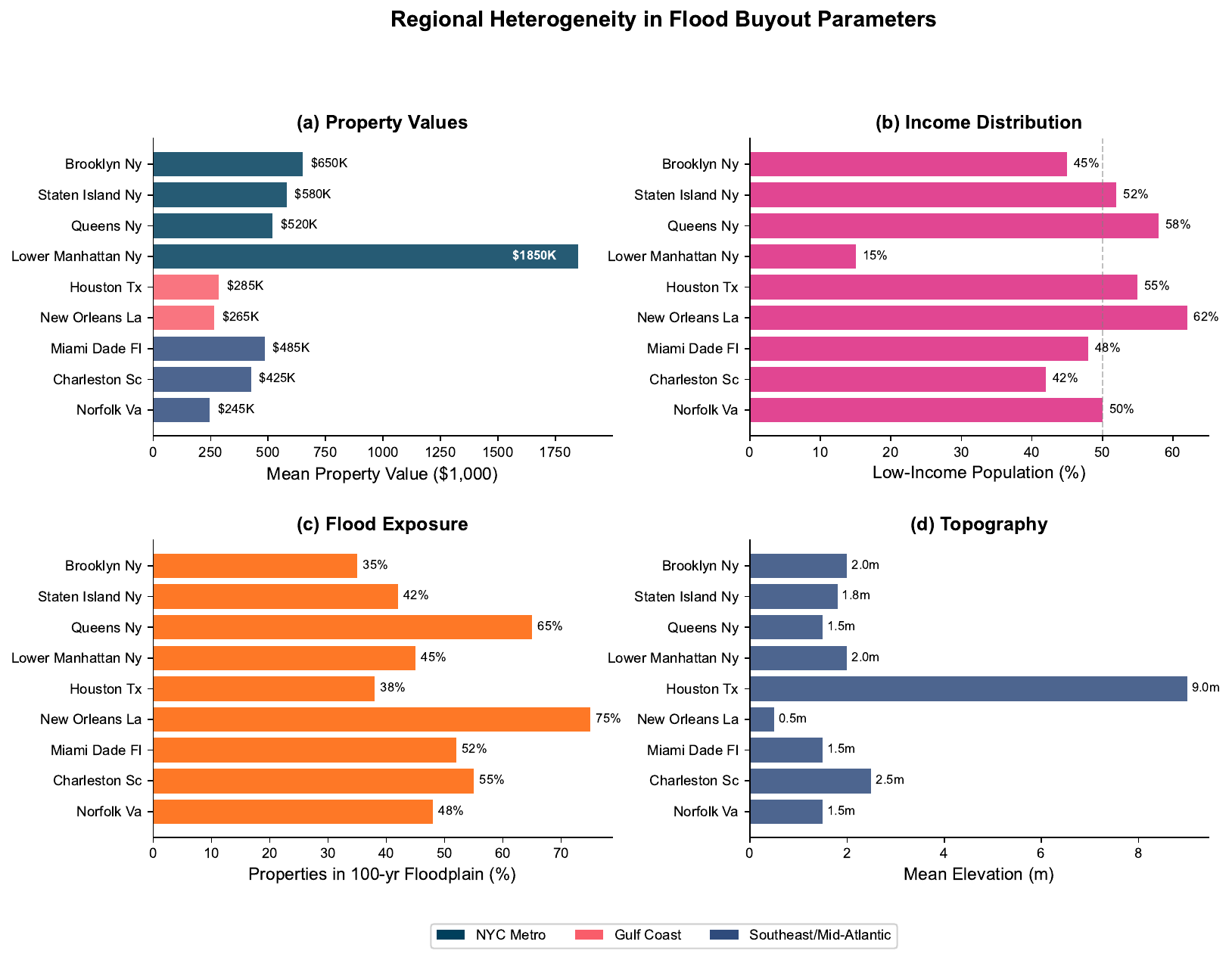}
\caption{\textbf{Regional heterogeneity across study areas.} \textbf{a}, Mean property values range from \$245,000 (Norfolk) to \$1.85 million (Lower Manhattan). \textbf{b}, Low-income population shares range from 15\% (Lower Manhattan) to 62\% (New Orleans). \textbf{c}, Flood exposure (percentage in 100-year floodplain) ranges from 35\% (Brooklyn) to 75\% (New Orleans). \textbf{d}, Mean elevation ranges from 0.5m (New Orleans, partially below sea level) to 9.0m (Houston). This heterogeneity drives substantial variation in how federal policy translates to local outcomes.}
\label{fig:regional}
\end{figure}

\begin{figure}[p]
\centering
\includegraphics[width=\textwidth]{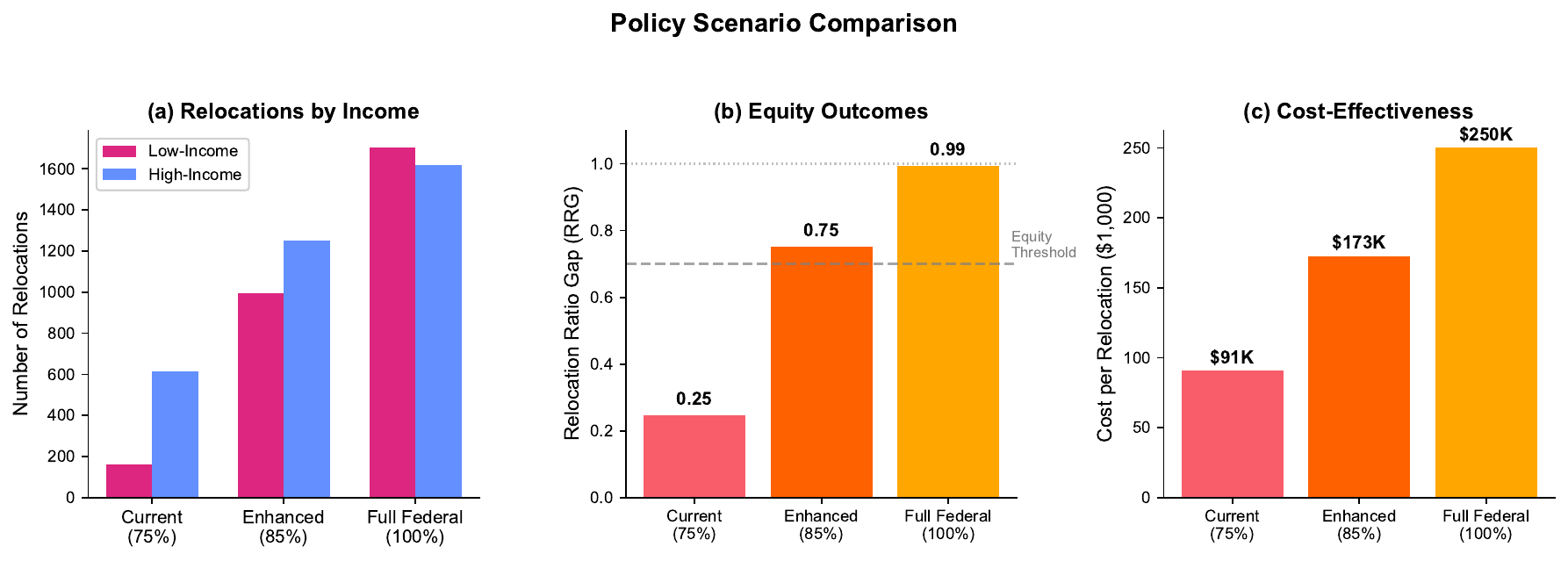}
\caption{\textbf{Comparison of policy alternatives.} \textbf{a}, Current FEMA 75/25 policy achieves low equity (RRG = 0.26) at moderate cost (\$82 million). \textbf{b}, Uniform high federal share ($\alpha = 0.90$) achieves high equity (RRG = 0.87) but at substantial cost (\$648 million). \textbf{c}, Equity-weighted cost sharing achieves comparable equity (RRG = 0.78) at 25\% lower cost (\$420 million) by directing resources to lower-income communities where marginal improvements in equity are largest.}
\label{fig:policy}
\end{figure}

\begin{figure}[p]
\centering
\includegraphics[width=0.75\textwidth]{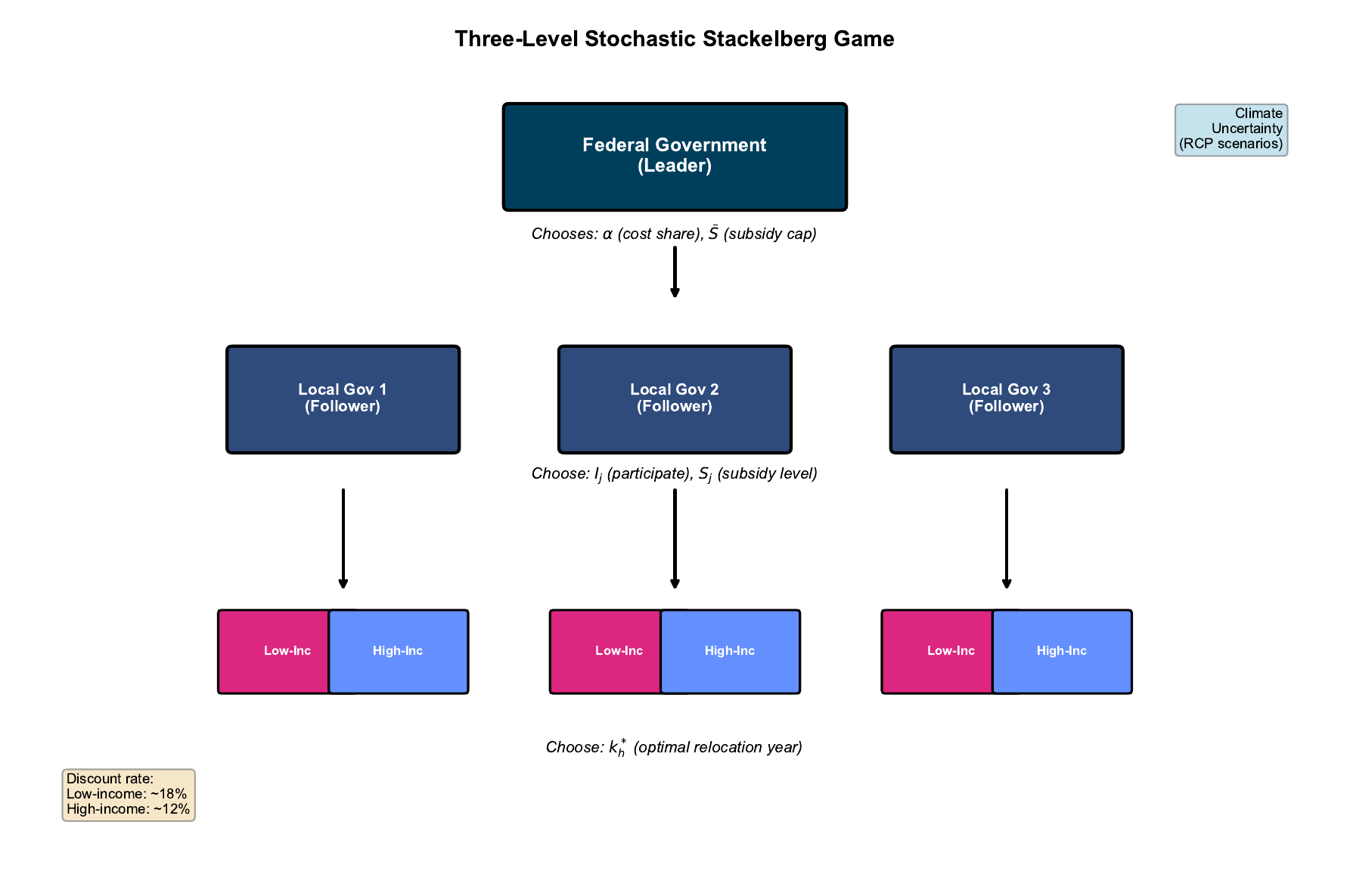}
\caption{\textbf{Three-level Stackelberg game structure.} The federal government (top) sets cost-sharing parameters $(\alpha, \bar{S})$ anticipating local government and household responses. Local governments (middle) choose participation and subsidy levels given federal policy. Households (bottom) make relocation decisions given offered subsidies. Backward induction from household decisions through local government responses yields subgame perfect equilibrium.}
\label{fig:mechanisms}
\end{figure}


\begin{table}[p]
\centering
\caption{\textbf{Model validation against historical buyout programs.} Predicted relocation rates from calibrated model compared with observed rates from four major US buyout programs. Prediction errors range from $-13$\% to $+16$\%, with slight overprediction for post-disaster programs (Staten Island, Harris County) and underprediction for proactive programs (Blue Acres, Charlotte-Mecklenburg).}
\label{tab:validation}
\begin{tabular}{lcccl}
\toprule
Program & Observed & Predicted & Error & Context \\
\midrule
Staten Island (Sandy) & 8.2\% & 9.1\% & +11\% & Post-disaster \\
Harris County (Harvey) & 4.5\% & 5.2\% & +16\% & Post-disaster \\
NJ Blue Acres & 6.8\% & 5.9\% & $-13$\% & Proactive \\
Charlotte-Mecklenburg & 12.1\% & 10.8\% & $-11$\% & Proactive \\
\bottomrule
\end{tabular}
\end{table}

\end{document}